\def\F{\mathbb{F}}
\def\Z{\mathbb{Z}}
\def\cA{{\cal
A}}\def\cT{{\mathcal T}}
\def\cS{{\mathcal{S}}}
\def\cA{{\mathcal{A}}}
\def\cR{{\mathcal{R}}}
\def\-3{{\sqrt{-3}}}
\def\3{{\sqrt{3}}}
\def\bc{{\bf c}}
\def\b0{{\bf 0}}
\def\Ga{{\alpha}}
\def\Gb{{\beta}}
\newtheorem{thm}{Theorem}[section]
\newtheorem{prop}[thm]{Proposition}
\newtheorem{lem}[thm]{Lemma}
\numberwithin{equation}{section} 
\newtheorem{ex}[thm]{Example}
\date{}
\begin{document}
\title{A Construction of Quantum Codes via A Class of Classical Polynomial Codes}

\author{Lingfei~Jin and Chaoping~Xing
\thanks{L.  Jin and C.  Xing are with Division of
Mathematical Sciences, School of Physical and Mathematical Sciences,
Nanyang Technological University, Singapore 637371, Republic of
Singapore (emails:ljin1@e.ntu.edu.sg; xingcp@ntu.edu.sg).}
\thanks{The work  was partially supported by the Singapore  National Research
Foundation under Research Grant NRF-CRP2-2007-03.}
%\thanks{C. P. Xing is the corresponding author.}
}

\maketitle

\begin{abstract} There have been various constructions of classical codes from polynomial valuations in literature \cite{ARC04, LNX01,LX04,XF04,XL00}. In this paper, we present a construction of classical codes based on polynomial construction again. One of the features of this construction is that not only the classical codes arisen from the construction have good parameters, but also  quantum codes with reasonably good parameters can be produced from these classical codes. In particular, some new quantum codes are constructed (see Examples \ref{5.5} and \ref{5.6}).
\end{abstract}
\begin{keywords} Cyclotomic cosets, Polynomials, Hermitian self-orthogonal, Quantum distance.
\end{keywords}
\section{Introduction}
One way to produce good quantum codes is to make use of Hermitian self-orthogonal classical codes \cite{Ash Kni}. To get $\ell$-ary quantum codes, one needs Hermitian self-orthogonal classical codes over $\F_{\ell^2}$ with good minimum distance of dual codes. Due to the fact that the Hermitian inner product involves power $\ell$ (see (\ref{eq:4.2})), the parameters of  quantum codes derived from Hermitian self-orthogonal classical codes are usually constrained. For instance, in \cite{Jin Ling Luo Xin} (also see \cite{Gra Bet Roe}), quantum MDS codes  produced by using Hermitian self-orthogonal classical codes have relatively small dimension.

In this paper, we first go to a field of larger size to obtain classical codes over $\F_{\ell^2}$ and then we select Hermitian self-orthogonal  codes from these classical codes over $\F_{\ell^2}$. In this way, we can produce good quantum codes. Our idea to produce classical codes over $\F_{\ell^2}$ from a field of large size has already been studied in the  previous papers \cite{ARC04, LNX01,LX04,XF04,XL00} where polynomial codes were considered. The main idea of this paper is to convert some of these codes into Hermitian self-orthogonal in order to construct quantum codes. It turns out that some new quantum codes can be produced (see Examples \ref{5.5} and \ref{5.6}).

The paper is organized as follows. In Section II, we introduce some
basic notations and results about cyclotomic cosets and corresponding polynomials. In Section III, we show how classical codes can be constructed from these cosets and polynomials. To construct quantum codes, we study dual codes of these classical codes in   Section IV. In the last section, we apply the results in the previous sections to construction of quantum codes.

\section{Cyclotomic cosets and corresponding polynomials}
Let $q$ be a prime power and let $n>1$ be a positive integer with $\gcd(q,n)=1$. Let $m$ be the order of $q$ modulo $n$, i.e, $m$ is the smallest positive integer such that $n$ divides $q^m-1$.

For any $a\in\Z_n$,  we define a $q$-cylotomic coset modulo $n$
\[S_a:=\{a\cdot q^i\bmod{n}:\; i=0,1,2,\dots\}.\]
 It is a well-know fact that all $q$-cyclotomic cosets partition the set $\Z_n$. Let $S_{a_1},S_{a_2},\dots,S_{a_t}$ stand for all distinct $q$-cyclotomic cosets modulo $n$. Then, we have  that $\Z_n=\cup_{i=1}^tS_{a_i}$ and $n=\sum_{i=1}^t|S_{a_i}|$. We denote by $s_a$ the size of the $q$-cyclotomic coset $S_a$.

 The following fact can be easily derived.
 \begin{lem}\label{2.1} For every $a\in\Z_n$, the size $s_a$ of $S_a$ divides $m$ which is the order of $q$ modulo $n$.
 \end{lem}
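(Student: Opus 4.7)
The plan is to characterize $s_a$ as the multiplicative order of $q$ in a suitable quotient and then invoke the standard fact that any exponent which is a period must be a multiple of the minimal period.

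First I would observe that, since $\gcd(q,n)=1$, multiplication by $q$ is a bijection on $\Z_n$, so the sequence $a, aq, aq^2,\ldots$ (mod $n$) is purely periodic. Consequently $s_a$ is exactly the smallest positive integer $k$ satisfying $aq^k\equiv a\pmod{n}$, equivalently $n\mid a(q^k-1)$. By definition of $m$, we have $q^m\equiv 1\pmod{n}$, so $aq^m\equiv a\pmod{n}$; this already gives $s_a\le m$.

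Next I would run the usual division-algorithm argument: write $m = s_a u + r$ with $0\le r < s_a$. Iterating $aq^{s_a}\equiv a\pmod{n}$ $u$ times yields $aq^{s_a u}\equiv a\pmod{n}$, hence
\[
a q^r \;\equiv\; a q^{s_a u + r} \;=\; aq^m \;\equiv\; a \pmod{n}.
\]
By the minimality of $s_a$ among positive integers $k$ with $aq^k\equiv a\pmod n$, the only allowable value in the range $0\le r < s_a$ is $r=0$. Therefore $s_a\mid m$.

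There is essentially no obstacle here; the only point that deserves a moment of care is justifying that $s_a$ is characterized as the smallest $k>0$ with $aq^k\equiv a\pmod n$ (as opposed to $s_a$ being the length of an eventually-but-not-purely periodic tail), and this is what the hypothesis $\gcd(q,n)=1$ gives for free via the invertibility of multiplication by $q$ on $\Z_n$.
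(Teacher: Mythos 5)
Your proof is correct and follows essentially the same route as the paper: both identify $s_a$ as the minimal positive $k$ with $aq^k\equiv a\pmod n$ and then conclude $s_a\mid m$ by the division-algorithm (minimal period) argument. The only cosmetic difference is that the paper first rewrites the condition as $n/\gcd(n,a)\mid q^{s_a}-1$, i.e.\ views $s_a$ as the order of $q$ modulo $n/\gcd(n,a)$, whereas you run the division argument directly on the congruence; your explicit remark that $\gcd(q,n)=1$ guarantees pure periodicity is a nice touch the paper leaves implicit.
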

\begin{proof} It is clear that  $s_a$ is the smallest positive integer such that $a\equiv aq^{s_a}\bmod{n}$, i.e, $s_a$ is the smallest positive integer such that $n/\gcd(n,a)$ divides $q^{s_a}-1$. Since $n/\gcd(n,a)$ also divides $q^m-1$, we have $m\equiv 0\bmod{s_a}$ by applying the long division.
\end{proof}

Now for each $S_a$, we form $s_a$ polynomials in the following way. Let $\Ga_1,\dots,\Ga_{s_a}$ be an $\F_q$-basis of $\F_{q^{s_a}}$ (note that $\F_{q^{s_a}}$ is a subfield of $\F_{q^{m}}$). Consider the polynomials $f_{a,j}(x):=\sum_{i=0}^{s_a-1}\left(\Ga_jx^a\right)^{q^i}$ for $j=1,2,\dots,s_a$.

\begin{lem}\label{2.2} For every $a\in\Z_n$, we have the following facts.
\begin{itemize}
\item[{\rm (i)}] The polynomials $f_{a,j}(x)$ for $j=1,2,\dots,s_a$ are linearly independent over $\F_q$.
\item[{\rm (ii)}] $f_{a,j}(\beta)$ belongs to $\F_q$ for all $\beta\in U_n\cup\{0\}$, where $U_n$ is the subgroup of $n$-th roots of unity in $\F_{q^m}^*$, i.e., $U_n:=\{\Gb\in\F_{q^m}^*:\; \Gb^n=1\}$.
    \end{itemize}
 \end{lem}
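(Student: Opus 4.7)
The plan is to handle the two parts separately, noting first that the whole construction is really just a trace map in disguise, once one expands
\[ f_{a,j}(x)=\sum_{i=0}^{s_a-1}\Ga_j^{q^i}x^{aq^i}. \]

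For part (i), I would consider an $\F_q$-linear relation $\sum_{j=1}^{s_a} c_j f_{a,j}(x)=0$ and rearrange it by monomials to get $\sum_{i=0}^{s_a-1}\bigl(\sum_{j=1}^{s_a}c_j\Ga_j^{q^i}\bigr)x^{aq^i}=0$. The key observation is that the integer exponents $aq^i$ for $0\le i\le s_a-1$ are pairwise distinct: when $a\ne 0$ and $q\ge 2$ the sequence is strictly increasing, and when $a=0$ one has $s_0=1$ so there is nothing to compare. Matching coefficients then gives $\sum_j c_j\Ga_j^{q^i}=0$ for every $i$; in particular the case $i=0$ combined with the $\F_q$-linear independence of $\Ga_1,\dots,\Ga_{s_a}$ forces all $c_j=0$.

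For part (ii), the critical preliminary step is to show that $\Gb^a\in\F_{q^{s_a}}$ for every $\Gb\in U_n$. Since $\Gb^n=1$, the order of $\Gb^a$ divides $n/\gcd(n,a)$, and the argument given in the proof of Lemma~\ref{2.1} shows that $n/\gcd(n,a)$ divides $q^{s_a}-1$. Hence $(\Gb^a)^{q^{s_a}}=\Gb^a$, so $\Gb^a\in\F_{q^{s_a}}$. Combined with $\Ga_j\in\F_{q^{s_a}}$ this gives $\Ga_j\Gb^a\in\F_{q^{s_a}}$, so
\[ f_{a,j}(\Gb)=\sum_{i=0}^{s_a-1}(\Ga_j\Gb^a)^{q^i}={\rm Tr}_{\F_{q^{s_a}}/\F_q}(\Ga_j\Gb^a)\in\F_q. \]
The case $\Gb=0$ is handled separately but trivially: if $a\ne 0$ then $f_{a,j}(0)=0\in\F_q$, and if $a=0$ then $s_0=1$ and $f_{0,1}(x)=\Ga_1$ is already a constant in $\F_q$.

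The only real obstacle is the subfield claim $\Gb^a\in\F_{q^{s_a}}$, which is the bridge between the cyclotomic-coset definition of $s_a$ and the trace interpretation of $f_{a,j}$; everything else is bookkeeping. Once this identification is in place, part (i) follows from distinct monomial exponents plus basis linear independence, and part (ii) follows because the trace map $\F_{q^{s_a}}\to\F_q$ lands in $\F_q$ by definition.
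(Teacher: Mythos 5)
Your proposal is correct and takes essentially the same route as the paper: part (i) is the paper's coefficient-of-$x^a$ argument made explicit, and in part (ii) your key step $\beta^a\in\F_{q^{s_a}}$ is exactly the paper's observation that $aq^{s_a}\equiv a\pmod{n}$ forces $\beta^{aq^{s_a}}=\beta^a$, after which you quote that ${\rm Tr}_{\F_{q^{s_a}}/\F_q}$ lands in $\F_q$ while the paper verifies the same Frobenius-invariance $(f_{a,j}(\beta))^q=f_{a,j}(\beta)$ by direct computation. The content is identical, and your explicit treatment of the $\beta=0$ and $a=0$ edge cases is a harmless refinement of what the paper leaves implicit.
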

\begin{proof} (i) is clear since the coefficients of $x^a$ in  $f_{a,j}(x)$ are $\Ga_j$ and $\Ga_1,\Ga_2,\dots,\Ga_{s_a}$ form  an $\F_q$-basis of $\F_{q^{s_a}}$.

To prove (ii), it is sufficient to prove that $(f_{a,j}(\beta))^q=f_{a,j}(\beta)$ for every $\beta\in U_n\cup\{0\}$. Consider
\begin{eqnarray*} (f_{a,j}(\beta))^q&=&\left(\sum_{i=0}^{s_a-1}\left(\Ga_j\Gb^a\right)^{q^i}\right)^q\\
&=&\sum_{i=0}^{s_a-1}\left(\Ga_j\Gb^a\right)^{q^{i+1}}=\sum_{i=1}^{s_a-1}\left(\Ga_j\Gb^a\right)^{q^{i}}+\Ga_j^{q^{s_a}}\Gb^{aq^{s_a}}\\
&=&\sum_{i=1}^{s_a-1}\left(\Ga_j\Gb^a\right)^{q^{i}}+\Ga_j\Gb^a=f_{a,j}(\beta).
\end{eqnarray*}
This completes the proof.
\end{proof}
\section{Construction of classical codes}
In this section, we give a construction of classical codes basing on the facts from Section 2. For a positive integer $r$ with $1\le r\le n-1$, consider the set of polynomials
\[P_r:=\{f_{a,j}(x):\; 0\le a\le r, \; j=1,2,\dots,s_a\}.\]
Denote the size of $P_K$ by $k_r$. From Lemma \ref{2.2}, it is clear that the polynomial space $V_r$ spanned by $P_r$ over $\F_q$ has dimension $k_r$.

The code $C_r$ is defined by
\begin{equation}\label{eq:3.1}\{(f(\beta))_{\beta\in U_n\cup\{0\}}:\; f\in V_r\}.\end{equation}
\begin{prop}\label{3.1} The code $C_r$ defined in {\rm (\ref{eq:3.1})} is a $q$-ary linear code with parameters $[n+1,k_r,\ge n+1-r]$.
\end{prop}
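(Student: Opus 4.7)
The plan is to first pin down the length and linearity, then establish the distance bound $d\ge n+1-r$, from which the dimension $k_r$ will follow. The length $n+1$ is immediate: since $\gcd(q,n)=1$ and $n\mid q^m-1$, the polynomial $x^n-1$ has $n$ distinct roots in $\F_{q^m}$, so $|U_n|=n$ and $|U_n\cup\{0\}|=n+1$. The evaluation map $\mathrm{ev}:V_r\to\F_{q^m}^{n+1}$ defined by $f\mapsto(f(\beta))_{\beta\in U_n\cup\{0\}}$ is $\F_q$-linear, and by Lemma~\ref{2.2}(ii) its image lies in $\F_q^{n+1}$.

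The main claim is that any nonzero $f\in V_r$ vanishes on at most $r$ points of $U_n\cup\{0\}$. Writing $f=\sum_{a,j}c_{a,j}f_{a,j}$ and setting $d_a=\sum_j c_{a,j}\alpha_j\in\F_{q^{s_a}}$, one has $f(x)=\sum_{a=0}^{r}\sum_{i=0}^{s_a-1}d_a^{q^i}x^{aq^i}$. For $\beta\in U_n$ the order of $\beta^a$ divides $n/\gcd(n,a)$, which by the proof of Lemma~\ref{2.1} divides $q^{s_a}-1$; hence $\beta^a\in\F_{q^{s_a}}$ and $f(\beta)=\sum_{a=0}^{r}\mathrm{Tr}_{\F_{q^{s_a}}/\F_q}(d_a\beta^a)$. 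Introduce the auxiliary polynomial $G(x)=\sum_{a=0}^{r}d_a x^a\in\F_{q^m}[x]$, of degree at most $r$; since the $\alpha_j$'s are an $\F_q$-basis of $\F_{q^{s_a}}$, we have $G\equiv 0$ if and only if $f\equiv 0$. The heart of the argument is then to show that $r+1$ or more zeros of $f$ in $U_n\cup\{0\}$ force $G\equiv 0$, contradicting $f\not\equiv 0$. Granting this, the distance bound follows at once, and it simultaneously forces $\mathrm{ev}$ to be injective on $V_r$, so that $\dim C_r=\dim V_r=k_r$.

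The delicate step is the reduction from zeros of $f$ to the vanishing of $G$, since the naive implication $f(\beta)=0\Rightarrow G(\beta)=0$ fails: a trace can vanish on nonzero arguments. I would attack it by exploiting the Frobenius action on $U_n$: because the values of $f$ lie in $\F_q$, the zero locus of $f$ is stable under $\beta\mapsto\beta^q$ (up to rechoosing the parameters $d_a$), and combining the resulting trace identities across a Frobenius orbit yields $\F_q$-linear constraints that together should pin down $G(\beta)=0$. An alternative would be to exhibit $C_r$ as the $\F_q$-rational restriction of the Reed--Solomon-type code $\{(G(\beta))_\beta:\deg G\le r\}$ of length $n+1$ and $\F_{q^m}$-dimension $r+1$, and then transport the MDS distance bound of the latter through the restriction. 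In either approach, the main technical difficulty is that the $d_a$'s live in different subfields $\F_{q^{s_a}}$, so the Frobenius orbits and trace-compatibility conditions must be tracked very carefully.
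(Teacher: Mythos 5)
Your setup (length $n+1$, $\F_q$-linearity via Lemma~\ref{2.2}(ii), and deducing $\dim C_r=k_r$ from the distance bound via injectivity of evaluation) is fine, but the proof has a genuine gap exactly where you yourself locate ``the heart of the argument'': the claim that $r+1$ zeros of $f$ on $U_n\cup\{0\}$ force $G\equiv 0$ is never proved, only two speculative strategies are sketched (``should pin down'', ``I would attack it by''). Worse, under the reading of $P_r$ you adopt (every integer $0\le a\le r$, with the exponents $aq^i$ left unreduced), that claim --- and indeed the distance bound $\ge n+1-r$ itself --- is false, so no amount of Frobenius-orbit or trace-compatibility bookkeeping can close it. Already a single summand $f(\beta)=\mathrm{Tr}_{\F_{q^{s_a}}/\F_q}(d_a\beta^{a})$ vanishes at roughly $n/q$ points of $U_n$: its zero set is the preimage under $\beta\mapsto\beta^a$ of the intersection of $U_n$ with an $\F_q$-hyperplane of $\F_{q^{s_a}}$. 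Concretely, with $q=4$, $n=51$, $r=2$, $a=2$, an averaging argument over $d_2\ne 0$ shows some choice of $d_2$ makes $\mathrm{Tr}_{\F_{256}/\F_4}(d_2\beta^{2})$ vanish at $13$ or more points of $U_{51}$, far more than $r=2$, while $G=d_2x^2\not\equiv 0$.

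The resolution is that $P_r$ must be read the way the paper's own examples and proof force: it contains $f_{a,j}$ only when the \emph{entire} cyclotomic coset $S_a$, i.e.\ all exponents $aq^i\bmod n$, lies in $\{0,1,\dots,r\}$ (in Example~\ref{3.2}, $r=16$ gives $k_r=5=1+|S_1|$, so only $\{0\}$ and $\{1,4,13,16\}$ are used, not $S_2=\{2,8,26,32\}$). With that reading the statement is the paper's one-line argument: on $U_n\cup\{0\}$ every $f\in V_r$ coincides with the polynomial obtained by replacing each $x^{aq^i}$ by $x^{aq^i\bmod n}$ (legitimate since $\beta^n=1$ for $\beta\in U_n$, and both sides vanish at $0$ when $a\not\equiv 0$), and this reduced polynomial has degree at most $r\le n-1$, hence at most $r$ roots; a nonzero $f$ therefore gives a codeword of weight at least $n+1-r$, and evaluation is injective, so $\dim C_r=k_r$. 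Your auxiliary polynomial $G$ and the Frobenius-stability of the zero locus are then superfluous; your second ``alternative'' (placing $C_r$ inside the length-$(n+1)$ evaluation code of degree-$\le r$ polynomials over $\F_{q^m}$ and inheriting its distance) does become correct once the exponents are reduced modulo $n$, but at that point it is just the degree argument in disguise and does not pass through $G$ at all.
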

\begin{proof}
As the degree of every polynomial $f(x)$ in $V_r$ is at most $r\le n-1$, it has at most $r$ roots. Thus, $(f(\beta))_{\beta\in U_n\cup\{0\}}$ has the Hamming weight at least $n+1-r$  as long as $f$ is a nonzero polynomial. Hence, the dimension of $C_r$ is the same as the one of $V_r$, i.e., $\dim(C_r)=k_r$. Moreover, the minimum distance of $C_r$ is at least $n+1-r$.
\end{proof}

\begin{ex}\label{3.2}{\rm Let $q=4$ and $n=51$. Then the order of $4$ modulo $51$ is $m=4$. All $4$-cyclotomic cosets modulo $51$ are
\begin{center} \begin{tabular}{|c|c|c|} \hline
$\{0\}$&$\{1,4,13,16\}$&$\{2,8,26,32\}$\\
$\{3,12,39,48\}$&$\{5,14,20,29\}$&$\{6,24,27,45\}$\\
$\{7,10,28,40\}$&$\{9,15,36,42\}$&$\{11,23,41,44\}$\\
$\{17\}$&$\{18,21,30,33\}$&$\{19,25,43,49\}$\\
$\{22,31,37,46\}$&$\{34\}$&$\{35,38,47,50\}$\\
\hline
\end{tabular}
\end{center}
For instance, for $r=16$, we obtain a $4$-ary $[52,5,\ge 36]$-linear code. This is an optimal code in the sense that for given length and dimension, the minimum distance can not be improved. For  $r=17$, we obtain a $4$-ary $[52,6,\ge 35]$-linear code which is best known based on the online table \cite{Gr12}.
}\end{ex}

\begin{ex}\label{3.3}{\rm Let $q=4$ and $n=63$. Then the order of $4$ modulo $63$ is $m=3$. All $4$-cyclotomic cosets modulo $63$ are
\begin{center} \begin{tabular}{|c|c|c|} \hline
$\{0\}$&$\{1,4,16\}$&$\{2,8,32\}$\\
$\{3,12,48\}$&$\{5,17,20\}$&$\{6,24,33\}$\\
$\{7,28,49\}$&$\{9,18,36\}$&$\{10,34,40\}$\\
$\{11,44,50\}$&$\{13,19,52\}$&$\{14,35,56\}$\\
$\{15,51,60\}$&$\{21\}$&$\{22,25,37\}$\\
$\{23,29,53\}$&$\{26,38,41\}$&$\{27,45,54\}$\\
$\{30,39,57\}$&$\{31,55,61\}$&$\{47,59,62\}$\\
$\{43,46,58\}$&$\{42\}$&\\
\hline
\end{tabular}
\end{center}
For instance, for $r=16$, we get a $4$-ary $[64,4,\ge 48]$-linear code. This is an optimal code in the sense that for given length and dimension, the minimum distance can not be improved. For $r=20$, again we get an optimal $4$-ary $[64,7,\ge 44]$-linear code.  For $r=21$,  an optimal $4$-ary $[64,8,\ge 43]$-linear code can be derived as well.
}\end{ex}

\section{Dual codes}
In this section, we study dual codes for those codes arisen from cyclotomic cosets. From now on, we assume that $q$ is even. Then $n$ is always odd (as $\gcd(n,q)=1$) and hence $n+1$ is even.

Two $q$-cyclotomic cosets $S_a$ and $S_b$ are called {\it dual} if there exists $c\in S_b$ such that $a+c$ is divisible by $n$. For instance, in Example \ref{3.2}, $\{1,4,13,16\}$ and $\{35,38,47,50\}$ are dual to each other. It is clear that the dual of a given cyclotomic coset is unique. Moreover, we have the following facts.
\begin{lem}\label{4.1} Let $S_a$ be the dual of a cyclotomic coset $S_b$. Then we have
\begin{itemize}
\item[{\rm (i)}] $|S_a|=|S_b|$
\item[{\rm (ii)}] For every $x\in S_a$, there exists  $y\in S_b$ such that  $x+y$ is divisible by $n$.
    \end{itemize}
\end{lem}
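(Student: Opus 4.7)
The plan is to extract a convenient algebraic identity from the duality hypothesis and then exploit the cyclic action of multiplication by $q$ on $\Z_n$.

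First I would unpack the definition: duality gives some $c \in S_b$ with $n \mid a+c$, and since $c \in S_b$ we can write $c \equiv b q^{j} \pmod n$ for some $j \geq 0$. The hypothesis thus boils down to the single identity
\[
a \;\equiv\; -b\, q^{j} \pmod n,
\]
which will drive both parts of the lemma.

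For part (i), I would multiply this identity by $q^{s_b}$ and use $b q^{s_b} \equiv b \pmod n$ to deduce $a q^{s_b} \equiv a \pmod n$; by the minimality built into the definition of $s_a$ (exactly as in the proof of Lemma~\ref{2.1}), this forces $s_a \mid s_b$. The reverse divisibility $s_b \mid s_a$ follows from a symmetric calculation: since $\gcd(q,n)=1$, I can invert $q^{j}$ modulo $n$ (or equivalently use $q^{m-j}$, since $q^m \equiv 1 \pmod n$) to rewrite the displayed identity as $b \equiv -a\,q^{-j} \pmod n$ and run the same argument with the roles of $a$ and $b$ swapped. Together these give $|S_a| = s_a = s_b = |S_b|$.

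For part (ii), given $x \in S_a$, write $x \equiv a q^{i} \pmod n$ for some $i \geq 0$. Substituting the key identity produces $x \equiv -b\, q^{i+j} \pmod n$, so the element $y := b q^{i+j} \bmod n$ lies in $S_b$ and satisfies $x + y \equiv 0 \pmod n$ by construction. There is no serious obstacle here: the argument is elementary modular arithmetic, and the only subtle point is to remember that the invertibility of $q$ modulo $n$ (guaranteed by $\gcd(q,n)=1$) is what makes the symmetric direction of the divisibility work.
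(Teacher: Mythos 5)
Your proof is correct, and part (ii) is in substance exactly the paper's argument: write $x\equiv aq^i\pmod n$, pull back through the duality relation, and take $y$ to be the corresponding element of $S_b$. The only (minor) divergence is in part (i): the paper first normalizes the hypothesis to $a+b\equiv 0\pmod n$ (replacing $b$ by the element $c\in S_b$ with $n\mid a+c$, which generates the same coset) and then notes that the minimality conditions defining $s_a$ and $s_b$ — smallest $s$ with $n\mid a(q^s-1)$, resp.\ $n\mid b(q^s-1)$ — are literally the same condition, while you keep the extra factor $q^j$ and instead prove the two divisibilities $s_a\mid s_b$ and $s_b\mid s_a$ using the invertibility of $q$ modulo $n$; this is slightly longer but equally valid.
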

\begin{proof} We may assume that $a+b$ is divisible by $n$. By definition, $s_b$ is the smallest positive integer  such that $n$ divides $b(q^{s_b}-1)$. Thus, $s_b$ is the smallest positive integer such that $n$ divides $-b(q^{s_b}-1)$. As $-b(q^{s_b}-1)\equiv a(q^{s_b}-1)\bmod{n}$, the desired result of part (i) follows.

Let $x\equiv aq^i\bmod{n}$ for some integer $i$. By definition, there exists $c\in S_b$ such that $a\equiv -c\bmod{n}$. Thus, $x\equiv aq^i\equiv -cq^i\bmod{n}$. Put $y=cq^i\bmod{n}\in S_b$. We obtain the desired result of part (ii).
\end{proof}

Consider a set $\cS$ of cyclotomic cosets such that $\{0\}\in \cS$. Let $\cS^*$ denote the collection of duals of cyclotomic cosets in $\cS$. We denote by $P_{\cS}$ the polynomial set
\[\{f_{a,j}(x):\; S_a\in \cS;\; j=1,2,\dots,s_a\}.\]
Let $V_{\cS}$ be the $\F_q$-space spanned by all polynomials in $P_{\cS}$. Define the $\F_q$-linear code by
\begin{equation}\label{eq:4.1}
C_{\cS}:=\{(f(\beta))_{\beta\in U_n\cup\{0\}}:\; f\in V_{\cS}\}
\end{equation}

Then we have the following result.

\begin{prop}\label{4.2} Let $\cA=\cup_{i=1}^tS_{a_i}$ be the set of all $q$-cyclotomic cosets modulo $n$. Then the Euclidean dual of $C_{\cS}$ is $C_{\cR}$, where $\cR=\{\{0\}\}\cup\left(\cA-\cS^*\right)$.
\end{prop}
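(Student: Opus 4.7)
The plan is to prove $C_{\cR}=C_{\cS}^{\perp}$ in two steps: (a) check orthogonality of the natural $\F_q$-spanning sets to obtain $C_{\cR}\subseteq C_{\cS}^{\perp}$, and (b) match dimensions, $\dim C_{\cS}+\dim C_{\cR}=n+1$, to upgrade the inclusion to equality.

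For (b), the same reasoning as in Proposition~\ref{3.1} gives $\dim C_{\cT}=|P_{\cT}|=\sum_{S\in\cT}|S|$ for any set $\cT$ of cyclotomic cosets containing $\{0\}$. Since $\{0\}\in\cS$ and $\{0\}^{*}=\{0\}$, the set $\cA-\cS^{*}$ does not contain $\{0\}$, and $\cR$ re-adds it as an extra summand. By Lemma~\ref{4.1}(i), $|S^{*}|=|S|$, hence $\sum_{S\in\cS^{*}}|S|=\sum_{S\in\cS}|S|=\dim C_{\cS}$, and so $\dim C_{\cR}=1+(n-\dim C_{\cS})=n+1-\dim C_{\cS}$.

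For (a), it suffices to verify that for every $S_a\in\cS$, $S_b\in\cR$, and every $j,k$,
\[
\langle f_{a,j},f_{b,k}\rangle:=\sum_{\beta\in U_n\cup\{0\}}f_{a,j}(\beta)\,f_{b,k}(\beta)=0.
\]
Expanding $f_{a,j}(x)=\sum_{i=0}^{s_a-1}\Ga_j^{q^i}x^{aq^i}$ and $f_{b,k}(x)=\sum_{l=0}^{s_b-1}\Gb_k^{q^l}x^{bq^l}$ (where $\Gb_1,\dots,\Gb_{s_b}$ is an $\F_q$-basis of $\F_{q^{s_b}}$), the product is a sum of monomials $\Ga_j^{q^i}\Gb_k^{q^l}x^{aq^i+bq^l}$. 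Using the standard identity $\sum_{\beta\in U_n}\beta^d=n$ if $n\mid d$ and $0$ otherwise, the inner product collapses to $f_{a,j}(0)f_{b,k}(0)$ plus $n$ times the sum of $\Ga_j^{q^i}\Gb_k^{q^l}$ over pairs $(i,l)$ with $aq^i+bq^l\equiv 0\pmod n$. Such a pair exists iff $-b\in S_a$, equivalently iff $S_a=S_b^{*}$.

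A short three-case analysis then closes the proof. If $a,b\ne 0$, then $S_b\in\cR\setminus\{\{0\}\}=\cA-\cS^{*}$, so $S_b\ne S_a^{*}$; no pair $(i,l)$ contributes and the $\beta=0$ term vanishes because $a,b\ne 0$. If exactly one of $a,b$ is zero, say $a=0$, then $f_{0,1}=\Ga_1$ is constant in $\F_q$ and the inner product reduces to $\Ga_1\sum_{\beta}f_{b,k}(\beta)$, which vanishes monomial-by-monomial. The delicate case is $a=b=0$: here $\langle f_{0,1},f_{0,1}\rangle=(n+1)\Ga_1^{2}$, and since $q$ is even (so $n$ is odd and $n+1\equiv 0\pmod 2$) the inner product again vanishes. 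This last step---forced by the fact that $\{0\}$ lies in both $\cS$ and $\cR$---is the main subtlety of the argument and the only place where the section's standing assumption that $q$ is even is genuinely used.
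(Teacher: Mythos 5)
Your proof is correct and follows essentially the same route as the paper's: a dimension count $\dim C_{\cR}=n+1-\dim C_{\cS}$ combined with orthogonality of the spanning polynomials via the vanishing of $\sum_{\beta\in U_n\cup\{0\}}\beta^k$ for $k\not\equiv 0\bmod n$, with the constant-against-constant case handled by $n+1\equiv 0$ in characteristic $2$. Your three-case split is just a slightly more explicit version of the paper's division into ``both polynomials equal to $1$'' versus ``at least one not equal to $1$,'' and you correctly identify the same place where the evenness of $q$ is used.
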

\begin{proof} First of all, the dimension of the code $C_{\cS}$ is $\sum_{S\in\cS}|S|$. Thus, the dimension of $C_{\cR}$ is $1+\sum_{S\in \cA}|S|-\sum_{T\in\cS}|T|=n+1-\dim(C_{\cS})$ (note the fact that $\sum_{S\in \cA}|S|=|\Z_n|=n$). To prove our lemma, it is sufficient to show that every codeword in $C_{\cS}$ is orthogonal to all codewords of $C_{\cR}$ under the dot product.

For a polynomial $u(x)$ in $P_{\cA}$, we denote by $\bc_{u}$ the codeword $(u(\beta))_{\beta\in U_n\cup\{0\}}$.  Let $f(x), g(x)$ be  polynomials in $P_{\cS}$ and $P_{\cR}$, respectively. If both $f(x)$ and $g(x)$ are equal to $1$. Then  $\bc_{f}=\bc_{g}$ is the all-one vector ${\bf 1}$. It is clear that in this case $\bc_{f}$ and $\bc_{g}$ are orthogonal  under the dot product. Now assume that at least one of $f(x), g(x)$ is not equal to $1$. Then  for any terms $x^i$ in $f(x)$ and  terms $x^j$ in $g(x)$, we have $i+j\not\equiv 0\bmod{n}$. Thus, the product $f(x)g(x)$ contains only terms $x^k$ with $k\not\equiv 0\bmod{n}$. For such $k$ we have
\[\sum_{\beta\in U_n\cup\{0\}}\Gb^k=\frac {\Ga^{kn}-1}{\Ga^k-1}=0,\]
where $\Ga$ is an $n$-th primitive root of unity in $U_n$. This implies that $\bc_{f}$ and $\bc_{g}$ are orthogonal  under the dot product. The desired result follows.
\end{proof}
\begin{ex}\label{4.3}{\rm Let $q=4$ and $n=51$. Let $\cS=\{\{0\},\{1,4,13,16\}\}$. By Example \ref{3.2}, we know that $\cR=\cA-\{\{35,38,47,50\}\}$.
}\end{ex}

In order to apply our results to quantum codes, we want to discuss the Hermitian dual of $C_{\cS}$ as well. Let us assume that $q$ is equal to $\ell^2$. The Hermitian inner product of the two vectors $(u_1,u_2,\dots,u_{n+1})$ and $(v_1,v_2,\dots,v_{n+1})$ in $\F_{\ell^2}^n$ is defined by
\begin{equation}\label{eq:4.2}\sum_{i=1}^{n+1}u_i^{\ell}v_i.\end{equation}

By abuse of notations, for a set $\cS=\{S_a\}_{a\in I}$ of cyclotomic cosets,  we denote by ${\ell}\cS$ the set $\{S_{a\ell}\}_{a\in I}$ of the cyclotomic cosets .

\begin{prop}\label{4.4} Under the inner product {\rm (\ref{eq:4.2})}, the Hermitian dual of $C_{\cS}$ is $C_{\cT}$, where $\cT=\{\{0\}\}\cup\left(\cA-(\ell\cS)^*\right)$.
\end{prop}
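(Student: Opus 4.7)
The plan is to reduce to Proposition~\ref{4.2} by relating Hermitian orthogonality to Euclidean orthogonality through the coordinate-wise $\ell$-th power map. Concretely, for any $\bu,\bv \in \F_{\ell^2}^{n+1}$, the Hermitian inner product $\sum u_i^{\ell}v_i$ is exactly the Euclidean dot product of $\bu^{\ell} := (u_1^\ell,\dots,u_{n+1}^\ell)$ with $\bv$. Therefore the Hermitian dual of $C_{\cS}$ coincides with the Euclidean dual of the image set $C_{\cS}^{(\ell)} := \{\bc^{\ell} : \bc \in C_{\cS}\}$, and the whole task reduces to identifying $C_{\cS}^{(\ell)}$ with $C_{\ell\cS}$ and then quoting Proposition~\ref{4.2}.

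For the identification $C_{\cS}^{(\ell)} = C_{\ell\cS}$, the key computation is that for each generator $f_{a,j}$ and each $\beta \in U_n \cup \{0\}$,
\[
f_{a,j}(\beta)^\ell \;=\; \sum_{i=0}^{s_a-1}(\alpha_j\beta^a)^{\ell q^i} \;=\; \sum_{i=0}^{s_a-1}\bigl(\alpha_j^{\ell}\,\beta^{a\ell}\bigr)^{q^i},
\]
and using $\beta^n=1$ we may replace $a\ell$ by $b := a\ell \bmod n$, recognising the right-hand side as the evaluation at $\beta$ of a polynomial in the subspace attached to $S_b$. Since $\gcd(\ell,n)=1$, multiplication by $\ell$ acts as a bijection on $\Z_n$ that permutes the $q$-cyclotomic cosets and preserves their sizes, so $\ell\cS$ is a legitimate collection of cosets of total size $\sum_{S_a\in\cS}|S_a|$. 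Moreover, as $\ell$ is a power of the characteristic, $\alpha \mapsto \alpha^\ell$ is a bijection (in fact a Frobenius automorphism) of $\F_{q^{s_a}}$, so $\{\alpha_j^\ell\}_j$ is again an $\F_q$-basis of $\F_{q^{s_a}} = \F_{q^{s_b}}$. The containment $C_{\cS}^{(\ell)} \subseteq C_{\ell\cS}$ follows; and since the coordinate-wise $\ell$-th power is a bijection on $\F_{\ell^2}^{n+1}$, a dimension count yields equality.

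Finally, since $\{0\}\in\cS$ forces $\{0\}\in\ell\cS$, Proposition~\ref{4.2} applies to $C_{\ell\cS}$ and identifies its Euclidean dual with $C_{\cR'}$ for $\cR' = \{\{0\}\}\cup(\cA - (\ell\cS)^*)$, which is exactly $\cT$ as in the statement. The only step requiring real care is the set equality $C_{\cS}^{(\ell)} = C_{\ell\cS}$: the $\ell$-th power is merely $\F_\ell$-linear (not $\F_q$-linear), so one must argue separately that the image is still an $\F_q$-subspace of the right dimension. This hinges on the two facts that $\ell$ is a Frobenius power (so $\alpha\mapsto\alpha^\ell$ is bijective on $\F_q$ and on each $\F_{q^{s_a}}$) and that $\gcd(\ell,n)=1$ (so $\ell$-multiplication is a bijection on $\Z_n$); neither is deep, but both must be explicitly invoked.
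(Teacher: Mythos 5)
Your proposal is correct and follows essentially the same route as the paper: the paper's proof simply asserts that the Hermitian dual of $C_{\cS}$ equals the Euclidean dual of $C_{\ell\cS}$ and then invokes Proposition~\ref{4.2}. You supply the details behind that assertion (the Frobenius computation $f_{a,j}(\beta)^{\ell}=\sum_i(\alpha_j^{\ell}\beta^{a\ell})^{q^i}$, the facts $\gcd(\ell,n)=1$ and that $\{\alpha_j^{\ell}\}$ is again an $\F_q$-basis, and the dimension count giving $C_{\cS}^{(\ell)}=C_{\ell\cS}$), which the paper leaves as ``clear.''
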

\begin{proof} It is clear that the Hermitian dual of $C_{\cS}$ is the Euclidean dual of $C_{\ell\cS}$. Now the desired result follows from Proposition \ref{4.2}.
\end{proof}

\begin{ex}\label{4.5}{\rm Let $q=4$ and $n=51$. Let $\cS=\{\{0\},\{1,4,13,16\}\}$. By Example \ref{3.2}, we know that $\cT=\cA-\{\{19,25,43,49\}\}$.
}\end{ex}
\section{Application to quantum codes}
In this section, we show how to apply the results from the previous sections to obtain quantum codes.

Instead of giving several complicated  results with detailed formula, we give a general result in this section. Then we use examples to illustrate our result.

\begin{thm}\label{5.1} Let $\cS$ be a set of $q$-cyclotomic cosets modulo $n$ and let $\cT=\{\{0\}\}\cup\left(\cA-(\ell\cS)^*\right)$ such that $(\ell\cS)^*$ contains all cyclotomic cosets $\{S_a:\; n+2-d\le a\le n-1\}$. If $\cS$ is a subset of $\cT$, then there exists an $\ell$-ary quantum code $[[n+1,n+1-2k,\ge d]]$, where $k$ is the $\F_q$-dimension of $C_{\cS}$.
\end{thm}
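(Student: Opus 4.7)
The plan is to combine Proposition~\ref{4.4} with the standard Hermitian quantum-code construction of \cite{Ash Kni}, which produces an $[[N,N-2k,\ge d^{\perp}]]_{\ell}$ quantum code from any Hermitian self-orthogonal $\F_{\ell^2}$-linear $[N,k]$ code $C$, where $d^{\perp}$ denotes the minimum distance of its Hermitian dual. I will apply this black box to $C=C_{\cS}$, which has length $n+1$ and $\F_{\ell^2}$-dimension $k$; doing so reduces the theorem to two facts: first, that $C_{\cS}$ is contained in its own Hermitian dual, and second, that the minimum distance of that Hermitian dual is at least $d$.

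The first fact is essentially immediate. By Proposition~\ref{4.4} the Hermitian dual of $C_{\cS}$ is exactly $C_{\cT}$, and the hypothesis $\cS\subseteq\cT$ gives $V_{\cS}\subseteq V_{\cT}$ and hence $C_{\cS}\subseteq C_{\cT}$.

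For the distance bound I plan to repeat, for the code $C_{\cT}$, the degree/number-of-roots argument already used in the proof of Proposition~\ref{3.1}. Each polynomial $f_{b,j}(x)$ has all of its monomial exponents inside the coset $S_b$, so the set of exponents that can occur in elements of $V_{\cT}$ is exactly $\{0\}\cup\bigcup_{S\in\cA-(\ell\cS)^*}S$. The assumption that $(\ell\cS)^*$ contains every coset $S_a$ with $n+2-d\le a\le n-1$ forces each such $S_a$ to be missing from $\cA-(\ell\cS)^*$ (and it is not equal to $\{0\}$), hence absent from $\cT$; in particular every integer in $[n+2-d,n-1]$ fails to appear as an exponent in $V_{\cT}$. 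Consequently every $f\in V_{\cT}$ has degree at most $n+1-d$, so it has at most $n+1-d$ zeros on $U_n\cup\{0\}$, and the nonzero codewords of $C_{\cT}$ have weight at least $(n+1)-(n+1-d)=d$.

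The only subtlety I anticipate is the coset-versus-integer bookkeeping in the distance step: the hypothesis is phrased in terms of cosets $S_a$ whose labels $a$ lie in $[n+2-d,n-1]$, while the degree bound needs every integer in that range to be absent from the exponent set of $V_{\cT}$. These match because $a\in S_a$, so once $S_a$ is excluded from $\cT$, its label $a$ automatically disappears from the exponent set as well; the stronger consequence that all other elements of $S_a$ also vanish is not needed for the degree bound but is free. With both facts established, feeding $C_{\cS}$ into the Hermitian construction yields the desired $[[n+1,n+1-2k,\ge d]]_{\ell}$ quantum code.
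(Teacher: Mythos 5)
Your proposal is correct and follows essentially the same route as the paper: invoke Proposition~\ref{4.4} to identify the Hermitian dual of $C_{\cS}$ as $C_{\cT}$, use $\cS\subseteq\cT$ for Hermitian self-orthogonality, apply the construction of \cite{Ash Kni}, and bound the distance of $C_{\cT}$ by noting its polynomials have degree at most $n+1-d$. Your write-up merely spells out the exponent/coset bookkeeping that the paper leaves implicit.
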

 \begin{proof}  By Proposition \ref{4.4}, the  Hermtian dual of $C_{\cS}$ is $C_{\cT}$. Under our assumption, $C_{\cS}$ is Hermitian self-orthogonal under the inner product {\rm (\ref{eq:4.2})}. Thus, we obtain  an $\ell$-ary quantum code $[[n+1,n+1-2k]]$ with minimum distance at least the Hamming distance  of $C_{\cT}$ (see \cite{Ash Kni}). As $P_{\cT}$ contains polynomials of degree at most $n+1-d$, the  Hamming distance  of $C_{\cT}$ is at least $d$. This completes the proof.
\end{proof}

\begin{ex}\label{5.2}{\rm Let $q=4$ and $n=21$. Then the order of $4$ modulo $21$ is $m=3$. All $4$-cycloyomic cosets modulo $21$ are
\begin{center} \begin{tabular}{|c|c|c|} \hline
$\{0\}$&$\{1,4,16\}\}$&$\{2,8,11\}$\\
$\{3,6,12\}$&$\{5,17,20\}$&$\{7\}$\\
$\{9,15,18\}$&$\{10,13,19\}$&$\{14\}$\\
\hline
\end{tabular}
\end{center}
Let $\cS=\{\{0\},\{1,4,16\},\{2,8,11\},\{3,6,12\}\}$. Then   $2\cS=\cS$ and
$(2\cS)^*=\{\{0\}, \{5,17,20\},\{10,13,19\},\{9,15,18\}\}.$ Moreover, $\cS$ is contained in $\cT=\{\{0\}\}\cup\left(\cA-(2\cS)^*\right)$. As $S_{17},S_{18},S_{19}$ and $S_{20}$ belong to $(2\cS)^*$, we obtain a binary quantum $[[22,2,6]]$ code which achieves the best-known parameters \cite{Gr12}.
}\end{ex}

\begin{ex}\label{5.3}{\rm Let $q=4$ and $n=51$. Then the order of $4$ modulo $51$ is $m=4$.
Let $\cS=\{\{0\},\{1,4,13,16\},\{2,8,26,32\}, \{6,24,27,45\}\}$. Then   $2\cS=\{\{0\},\{1,4,13,16\},\{2,8,26,32\},\{3,12,39,48\}\}$ and
$(2\cS)^*=\{\{0\},\{35,38,47,50\}, \{19,25,43,49\},\{3,12,39,48\}\}$. Moreover, $\cS$ is contained in $\cT=\{\{0\}\}\cup\left(\cA-(2\cS)^*\right)$. As $S_{47}, S_{48}, S_{49}$ and $S_{50}$ belong to $(2\cS)^*$, we obtain a binary quantum $[[52,26,6]]$ code which meets the best-known one in the online table \cite{Gr12}.

In the similar way, we obtain binary quantum codes with parameters $[[52,24,7]]$ and $[[52,8,10]]$. Both codes meet the parameters of the best-known ones in \cite{Gr12}.
}\end{ex}

\begin{ex}\label{5.4}{\rm Let $q=4$ and $n=63$. Then the order of $4$ modulo $63$ is $m=3$.
\begin{itemize}
\item [(i)] $\cS=\{\{0\},\{1,4,16\},\{2,8,32\}\}$. Then   $2\cS=\cS$ and
$(2\cS)^*=\{\{0\},\{31,55,61\},\{47,59,62\}\}.$ Moreover, $\cS$ is contained in $\cT=\{\{0\}\}\cup\left(\cA-(2\cS)^*\right)$. As $S_{61}$ and $S_{50}$ belong to $(2\cS)^*$, we obtain a binary quantum $[[64,50,4]]$ code which is optimal \cite{Gr12}.
\item [(ii)] $\cS=\{\{0\}\},\{1,4,16\},\{2,8,32\}, \{6,24,33\}\}$. Then   $2\cS=\{\{0\}\},\{1,4,16\},\{2,8,32\},\{3,12,48\}\}$ and
$(2\cS)^*=\{\{0\},\{15,51,60\}\{31,55,61\},\{47,59,62\}\}$. Moreover, $\cS$ is contained in $\cT=\{\{0\}\cup\left(\cA-(2\cS)^*\right)$. As $S_{59}, S_{60}, S_{61}$ and $S_{62}$ belong to $(2\cS)^*$, we obtain a binary quantum $[[64,44,6]]$ code which is optimal again \cite{Gr12}.
\end{itemize}
Analogously, binary quantum codes with parameters $[[64,38,7]]$ and $[[64,32,8]]$ can be derived. Both codes meet the parameters of the best-known ones in \cite{Gr12}.
}\end{ex}

\begin{ex}\label{5.5}{\rm Let $q=16$ and $n=51$. Then the order of $16$ modulo $51$ is $m=2$.
Let $\cS=\{\{0\},\{12,39\},\{8,26\},\{4,13\}\}$. Then   $4\cS=\{\{0\},\{1,16\},\{2,32\},\{3,48\}\}$ and
$(4\cS)^*=\{\{0\},\{3, 48\},\{19, 49\},\{35, 50\}\}$. Moreover, $\cS$ is contained in $\cT=\{\{0\}\}\cup\left(\cA-(2\cS)^*\right)$.  As $S_{50}, S_{49}$ and $S_{48}$ belong to $(4\cS)^*$, we obtain a $4$-ary quantum $[[52,38,5]]$-code.

Likewise, we obtain $4$-ary quantum codes with parameters $[[52,34,6]]$, $[[52,30,7]]$, $[[52,26,8]]$, $[[52,22,9]]$, $[[52,18,10]]$ and $[[52,14,12]]$. The last one meets the parameters of the best-known ones in \cite{Br12} and the rest are new to the online table \cite{Br12}.
}\end{ex}

\begin{ex}\label{5.6}{\rm Let $q=64$ and $n=585$. Then the order of $64$ modulo $585$ is $m=2$.
Let $\cS=\{\{0\},\{8, 512\},\{16,439\}\}$. Then   $8\cS=\{\{0\},\{1,64\},\{2,128\}\}$ and
$(8\cS)^*=\{\{0\},\{457, 583\},\{521, 584\}\}$. Moreover, $\cS$ is contained in $\cT=\{\{0\}\}\cup\left(\cA-(2\cS)^*\right)$.  As $S_{584}$ and $S_{583}$  belong to $(8\cS)^*$, we obtain a $8$-ary quantum $[[586,576,4]]$-code.

In the similar way, we draw $8$-ary quantum codes with parameters $[[586,572,5]]$, $[[586,568,6]]$, $[[586,564,7]]$, $[[586,560,8]]$, $[[586,556,9]]$, $[[586,552,10]]$, $[[586,548,11]]$, $[[586,544,12]]$, $[[586,540,13]]$, $[[586,536,14]]$, $[[586,532,15]]$ and so on. Now, compared with the online table \cite{Br12}, these codes have better parameters. For instance, $8$-quantum codes with the parameters  $[[589,553,4]]$, $[[589,513,6]]$, $[[627,561,5]]$, $[[627,531,6]]$, $[[627,501,7]]$, $[[629,557,6]]$, $[[629,533,7]]$, $[[629,521,8]]$
  are given in \cite{Br12}. We can see that with the same distances our codes have bigger dimensions, but smaller lengths.
}\end{ex}

\end{document}